	\providecommand\BibTeX{{%
			\normalfont B\kern-0.5em{\scshape i\kern-0.25em b}\kern-0.8em\TeX}}}
\begin{document}
	
	\title{Unbiased Knowledge Distillation for Recommendation}
	
	\author{Gang Chen}
	\affiliation{%
		\institution{University of Science and Technology of China}
		\city{Hefei}
		\country{China}}
	\email{gchenx@mail.ustc.edu.cn}
	
	\author{Jiawei Chen}
	\authornote{Corresponding author}
	\affiliation{%
		\institution{Zhejiang University}
		\city{Hangzhou}
		\country{China}
	}
	
	\email{sleepyhunt@zju.edu.cn}
	
	\author{Fuli Feng}
	\affiliation{%
		\institution{University of Science and Technology of China}
		\city{Hefei}
		\country{China}
	}
	\email{fulifeng93@gmail.com}
	
	\author{Sheng Zhou}
	\affiliation{%
		\institution{Zhejiang University}
		\city{Hangzhou}
		\country{China}
	}
	\email{zhousheng_zju@zju.edu.cn}

	\author{Xiangnan He}
	\authornotemark[1]
	\affiliation{%
		\institution{University of Science and Technology of China}
		\city{Hefei}
		\country{China}
	}
	\email{xiangnanhe@gmail.com}

	
	\newcommand{\ie}{\emph{i.e., }}
	\newcommand{\eg}{\emph{e.g., }}
	\newcommand{\etal}{\emph{et al.}}
	\newcommand{\etc}{\emph{etc.}}
	\newcommand{\wrt}{\emph{w.r.t. }}
	\newcommand{\cf}{\emph{cf. }}
	\newcommand{\aka}{\emph{aka. }}

	\begin{abstract}
		
		As a promising solution for model compression, knowledge distillation (KD) has been applied in recommender systems (RS) to reduce inference latency. Traditional solutions first train a full teacher model from the training data, and then transfer its knowledge (\ie \textit{soft labels}) to supervise the learning of a compact student model. However, we find such a standard distillation paradigm would incur serious bias issue --- popular  items are more heavily recommended after the distillation. This effect prevents the student model from making accurate and fair recommendations, decreasing the effectiveness of RS.  

		In this work, we identify the origin of the bias in KD --- it roots in the biased soft labels from the teacher, and is further propagated and  intensified during the distillation. To rectify this, we propose a new KD method with a stratified distillation strategy. It first partitions items into multiple groups according to their popularity, and then extracts the ranking knowledge within each group to supervise the learning of the student. Our method is simple and teacher-agnostic --- it works on distillation stage without affecting the training of the teacher model. We conduct extensive theoretical and empirical studies to validate the effectiveness of our proposal. We release our code at: https://github.com/chengang95/UnKD.

	\end{abstract}
	
	\begin{CCSXML}
		<ccs2012>
		<concept>
		<concept_id>10002951.10003317.10003347.10003350</concept_id>
		<concept_desc>Information systems~Recommender systems</concept_desc>
		<concept_significance>500</concept_significance>
		</concept>
		</ccs2012>
	\end{CCSXML}
	
	\ccsdesc[500]{Information systems~Recommender systems}
	
	\keywords{Recommendation, Knowledge Distillation, Bias and Debias}

	\maketitle
	
	\section{Introduction}
	Recommender system (RS) has become increasingly important with the universalization of online personalized services. 
	With the increasing scale of  items,  the trade-off between the accuracy and efficiency in modern RS cannot be ignored. A large model with numerous parameters has a high capacity, and thus is shown to have better  accuracy. However, its success requires heavy computational and memory costs, which would incur unacceptable latency during the inference phase, making it hard to be applied in real-time RS.

	To deal with such dilemma, \textit{knowledge distillation} (KD) has been applied in recommender system \cite{10.1145/3219819.3220021,8970837,10.1145/3442381.3449878}, with the purpose of reducing model size while maintaining model performance. KD  first trains a large teacher model from the training set, and then learns a small student model with the supervision from the \textit{soft labels} that are generated by the teacher. As the soft labels encode the knowledge learned by the teacher, the student can benefit more from it and achieve better performance than the student directly learning from the training data.
	
	Despite decent performance, we argue that the distillation is severely biased towards popular items. We make an empirical study of existing KDs on three benchmark recommendation datasets. The results are presented in Table \ref{tab:table_3}. The overall improvements of KDs mainly lie on the popular group, while the performance of the unpopular group drops significantly (22.4\% on average). This impressive result clearly reveal the severe bias issue in KDs, which is essential to be overcome. This negative effect will hinder the student model from completely understanding user preference. Worse still, it will decrease the level of the diversity and fairness in recommendations, heavily deteriorating user experience. 
	
	
	In view of this phenomenon, we first identify  the origin of the bias --- \textit{ the biased soft labels generated by the teacher} --- which is further intensified by the distillation process in training the student. 
	Figure \ref{fig:image_6} provides the evidence of biased  teacher prediction, where we train a standard matrix factorization (MF) \cite{DBLP:journals/corr/abs-1205-2618}  and count the ratios of popular/unpopular items in the top-10 recommendation lists. As can be seen, the top-10 items with the largest scores are severely biased towards mainstream. Worse still, such bias would be inherited and amplified during the distillation. Existing KDs \cite{10.1145/3219819.3220021,8970837} usually simply consider higher-ranked items as positive and give them larger confidence weights. As a result, popular items would exert excessive contribution on student model training, causing the bias of the student. 
	

	Being aware of the origin of the distillation bias, now the question lies on how to eliminate this negative effect. A straightforward solution is to directly intervene into the training of the teacher model to generate unbiased soft labels, which however is difficult to achieve. On the one hand, the teacher bias may root in multiple factors, including but not limited to the momentum-based optimizer \cite{tang2020longtailed}, imperfect loss function \cite{10.1145/3404835.3462919}, and the factorization model architecture \cite{DBLP:journals/corr/abs-1909-06362}. Completely isolating bias from teacher is itself highly challenging. 
	On the other hand, a teacher-agnostic KD strategy is more desirable. In  practice, a large teacher model is usually deployed in a complex distributed system, and adjusting its training procedure is difficult objectively, not to mention the teacher can be an ensemble of multiple models. As such, in this work, we propose an \textbf{Un}biased \textbf{K}nowledge \textbf{D}istillation strategy (\textbf{UnKD}) that performs debiasing during the training of the student model. Specifically, UnKD resorts to a skillful popularity-aware distillation: it first partitions items into multiple groups according to their popularity, and then extracts the ranking knowledge among each group to supervise the learning of the student.
	On the basis of causal theory, we prove that such stratification strategy can almost block the causal effect from the teacher bias.
	Remarkably, UnKD is simple and model-agnostic. We implement it on MF \cite{DBLP:journals/corr/abs-1205-2618} and LightGCN \cite{10.1145/3397271.3401063} to demonstrate effectiveness.
	
	To summarize, this work makes the following contributions:
	\begin{itemize}
		\item Revealing the bias issue of knowledge distillation in recommender systems.
		\item Proposing an unbiased teacher-agonistic knowledge distillation (UnKD) that extracts popularity-aware ranking knowledge to guide student learning.
		\item Conducting extensive experiments on three real datasets to demonstrate the superiority of UnKD over state-of-the-arts.
	\end{itemize}%
	
	The rest of the paper is organized as follows. In Section 2, we introduce the background of knowledge distillation. In Section 3, we provide causal view on bias issue in knowledge distillation and then detail our proposed UnKD. The experiments and discussions are presented in Section 4. Finally, we provide related work and conclusions in Section 5 and Section 6.

	\begin{table}[t]\small %
		\centering
		\caption{Performance (Recall@10) comparison of various knowledge distillation methods in terms of popular/unpopular items on three real-world datasets. We also report the relative improvements over the baseline (`Student') that is directly learned from the data. The experimental settings and the group partition are detailed in Section 4.}
		\setlength{\tabcolsep}{0.75mm}{
			\begin{tabular}{|c|c|c|c|c|c|c|}
				\hline
				\multicolumn{7}{|c|}{Movielens}\\
				\hline
				&Student&Teacher&RD\cite{10.1145/3219819.3220021}&CD\cite{8970837}&DERRD\cite{10.1145/3340531.3412005}&HTD\cite{10.1145/3447548.3467319} \\
				\hline
				Popular &0.2156&0.2565 &0.2237 &0.2258 &0.2315&0.2228  \\
				Group& ---& +18.97\%& +3.75\% &+4.73\%&+7.37\%&+3.33\%\\
				\hline
				Unpopular&0.0250&0.0517& 0.0242&0.0179 &0.0113&0.0187  \\
				Group& --- & +106.80\%&-3.20\% &-28.40\%  &-54.80\%  &-25.20\%  \\
				\hline
				\multicolumn{7}{|c|}{Apps}\\
				\hline
				&Student&Teacher&RD\cite{10.1145/3219819.3220021}&CD\cite{8970837}&DERRD\cite{10.1145/3340531.3412005}&HTD\cite{10.1145/3447548.3467319} \\
				\hline
				Popular &0.1031&0.1448 &0.1144&0.1212 &0.1058 &0.1195  \\
				Group&  ---& +40.44\%& +10.96\% &+17.55\%&+2.61\%&+15.90\%\\
				\hline
				Unpopular  &0.0109  &0.0164&0.0098&0.0090&0.0098 &0.0061    \\
				Group& --- & +50.45\%&-10.09\% &-17.43\%  &-10.09\%  &-44.03\%  \\
				\hline
				\multicolumn{7}{|c|}{CiteULike}\\
				\hline
				&Student&Teacher&RD\cite{10.1145/3219819.3220021}&CD\cite{8970837}&DERRD\cite{10.1145/3340531.3412005}&HTD\cite{10.1145/3447548.3467319} \\
				\hline
				Popular &0.0831&0.1294&0.0910&0.0887&0.0899 &0.0885 \\
				Group&---  & +55.71\%& +9.50\% &+6.73\%&+8.18\%&+6.49\%\\
				\hline
				Unpopular &0.0095& 0.0537&0.0085&0.0088 &0.0075  &0.0068   \\
				Group& --- & +465.26\%&-10.52\% &-7.36\%  &-21.05\%  &-28.42\%  \\
				\hline
		\end{tabular}}
		\label{tab:table_3}
	\end{table}%

\section{PRELIMINARIES}
	\begin{figure}[t]
	\centering
	
	\includegraphics[scale=0.2]{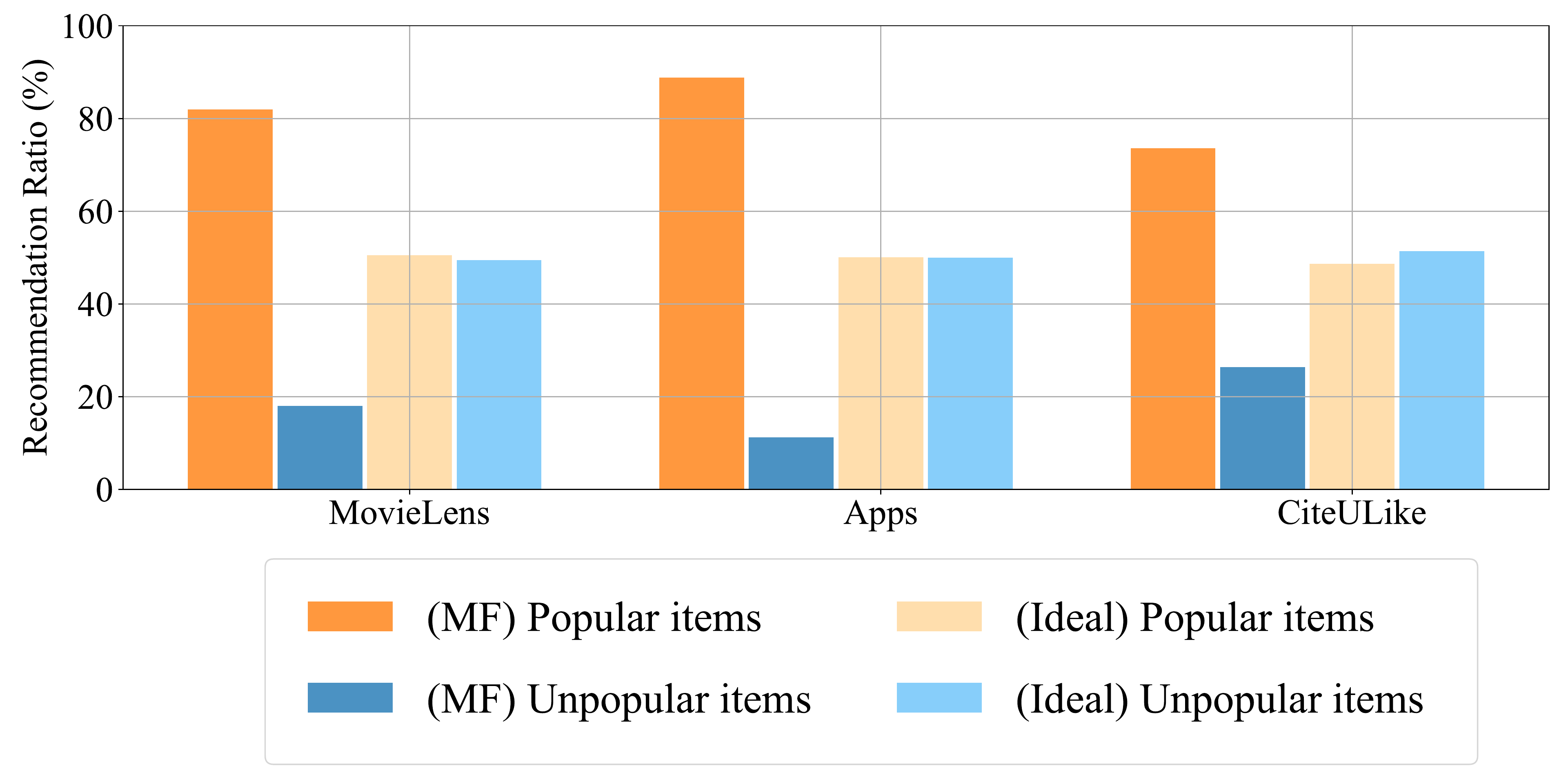}
	\caption{Ratios of popular/unpopular items in the top-10 recommendation lists from a MF model. We also present the ideal ratios from the test data for comparison.} 
	
\label{fig:image_6}
\end{figure}%
In this section, we first introduce the basic notations and formulate the recommendation task. We then provide the background of knowledge distillation. 

We use uppercase character (\eg$U$) to denote a random variable
and lowercase character (\eg $u$) to denote its specific value. We use
characters in calligraphic font (\eg$\mathcal{U}$) to represent the sample space
of the corresponding random variable. We use the notation $|*|$ for the size of the collection, \eg $| \mathcal{U} |$ denoting the size of $\mathcal{U}$.

\textbf{Recommendation Task}. Suppose we have a recommender system with a user set $\mathcal{U} =\{u_{1},... ,u_{m}\}$ and an item set $\mathcal{I} = \{i_{1},...,i_{n}\}$. The collected historical user-item feedback can be formulated as a matrix $R\in \{1,0\}^{m\times n}$, where each entry $r_{ui} \in R$ denotes whether a user $u$ has interacted with an item $i$. Given user $u$, $\mathcal{I}_{u}^{+} = \{i \in \mathcal{I}|r_{ui} = 1\}$ is the set of items with known positive feedback, and $\mathcal{I}_{u}^{-} =\mathcal{I}\backslash \mathcal{I}_{u}^{+}$ is the set of items with missing feedback \cite{4781121,10.1145/2911451.2911502,10.5555/3020488.3020521}. For each user, the goal of a recommender system is to find items that are most likely to be interacted.

\textbf{Knowledge Distillation}. Knowledge distillation \cite{Liu_2019_CVPR,NEURIPS2021_b9f35816,Hu_2021_CVPR} is a promising model compression technique that first trains a large teacher model and then transfers the knowledge from the teacher to the target compact student model. 
In RS, soft labels --- \ie the teacher predictions on the user-item interactions, are commonly-used for knowledge transfer. These KDs \cite{10.1145/3219819.3220021,8970837,10.1145/3340531.3412005} would create or sample the training instances according to soft labels for training a student model. As such, the quality of the soft labels lays a foundation of knowledge distillation. The distillation process is shown in Figure \ref{causal:causal_4}(a). It is worth to mention that there is work HTD \cite{10.1145/3447548.3467319} that utilizes teacher embeddings rather than soft labels for knowledge distillation. However, we also observe serious bias issue in HTD (Table \ref{tab:table_3}). In fact, our analyses are also suitable for this method, if we simply replace the term `Soft labels' with `Teacher embeddings'. 

\begin{figure}[t]
	\centering
	
	\subfigure[Generative process of the soft label.]{
		\begin{minipage}[t]{0.5\linewidth}
			\centering
			\includegraphics[scale=0.7]{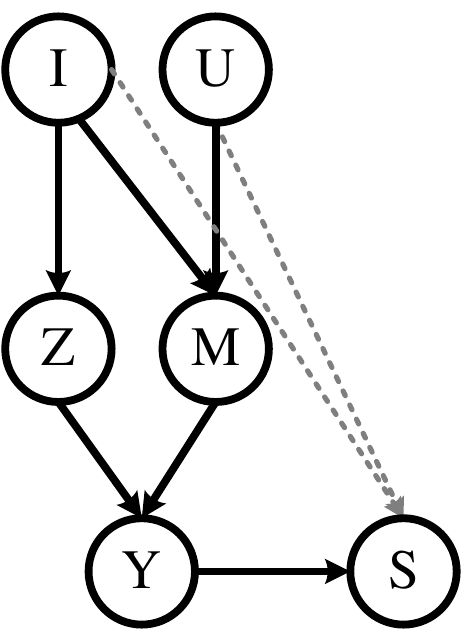}
		\end{minipage}%
		\label{causal:causal_1}
	}%
	\subfigure[We cut off $Z\to Y$ during distillation.]{
		\begin{minipage}[t]{0.5\linewidth}
			\centering
			\includegraphics[scale=0.7]{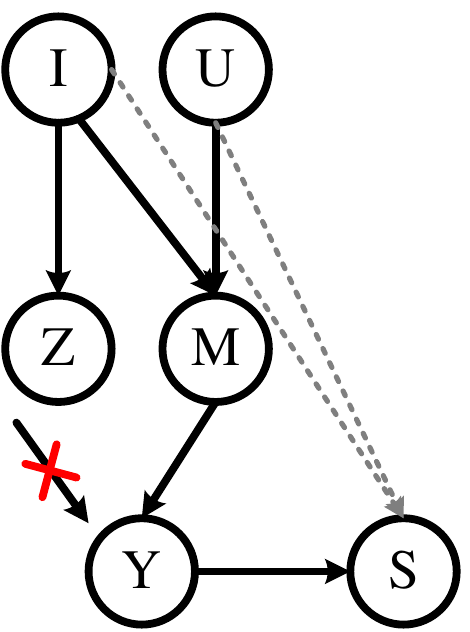}
		\end{minipage}%
		\label{causal:causal_2}
	}%

	\centering
	\caption{ The causal graph to describe the knowledge distillation. $U$: user, $I$: item,  $M$ affinity score, $Z$ item popularity, $Y$: soft label, $S$: student. The bias origins from the  causal effect of $Z$ on $Y$. Our UnKD is intended to cut off $Z\to Y$. Admittedly, there may exist other causal paths from $U, I$ to $S$, but here we only focus on the causal effect through distillation (\ie via $Y$).}
\end{figure}%

\begin{figure*}[t]
	\centering
	
	\includegraphics[height=0.27\textheight,width=1\textwidth]{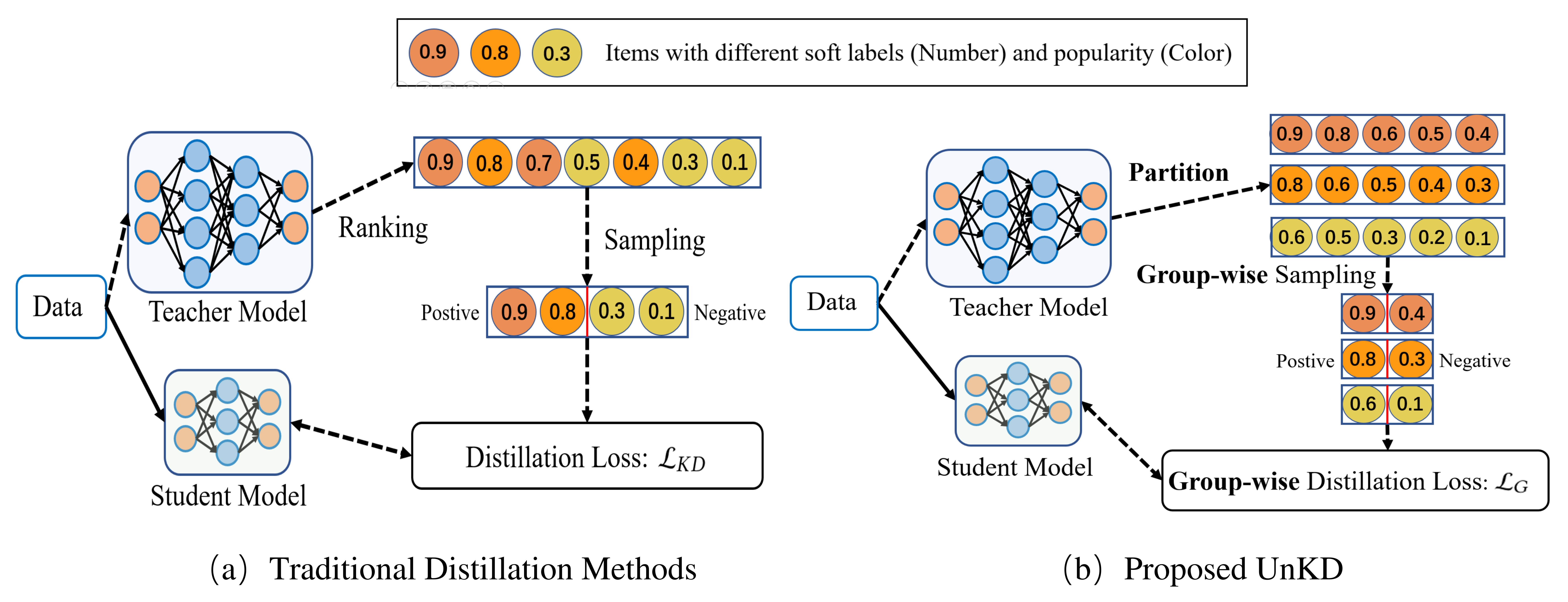}
	\caption{Illustrations of (a) the traditional knowledge distillations and (b) our proposed UnKD. UnKD partitions items into multiple groups according to their popularity, and then extracts the ranking knowledge among each group to learn the student.}
	\label{causal:causal_4}
\end{figure*}%

\section{METHODOLOGY}
In this section, we first resort to a causal graph to trace the origin of bias in knowledge distillation. We then introduce the proposed UnKD and discuss its rationality for eliminating the bias.

\subsection{A Causal View on Distillation Bias}

\textbf{Origin of Distillation Bias.} To trace the origin of distillation bias and to understand how it affects student model, we resort to the language of causal graph \cite{causal:Causality} for a qualitive analysis. Figure \ref{causal:causal_1} illustrates causal relations behind existing distillation methods, which consists of six random variables including:
\begin{itemize} 
	\item $U$ represents a user node, \eg user profile or feature (\eg IDs) that is used for representing a user;
	\item Similar to $U$, Node $I$ represents an item node;
	\item $M$ represents the real affinity score between a user $U$ and an item $I$, reflecting to what extent that the item matches the preference of the user; 
	\item $Z$ represents the item popularity;
	\item $Y$ represents the soft label predicted by the teacher model;
	\item $S$ represents the learned student model.
\end{itemize}
The edges in the graph describe the causal relations between variables. Specifically, we have:
\begin{itemize} 
	\item Edges $(U,I)\to M$ depict the causal effect of the features of a user and an item on their affinity;
	\item Edges $I \to Z$ depicts that the popularity of an item is affected by its characteristics;
	\item Edges $(M,Z)\to Y$ show that the soft label $Y$ is affected by two factors: 1) $M\to Y$, the desirable effect from the affinity; 2) $Z \to Y$, the influence from the item popularity, where an item with larger popularity is prone to have higher prediction score. Recent work has validated the effect of $Z \to Y$ is common in recommendation. It can be original from the biased data (\ie popular items is usually over-exposed \cite{10.1145/3404835.3462875}), learning algorithm (momentum-based optimizer is biased towards mainstream \cite{tang2020longtailed}) or recommendation architecture\cite{DBLP:journals/corr/abs-1909-06362} (\ie latent factor models prefer to promote popular groups ).
	\item Edge $Y \to S$ depicts the student model is learned under the supervision of soft labels. 
\end{itemize}

According to the causal graph, since there exists an additional path $(I\to Z \to Y)$ from $I$ to $Y$, the learned soft label would be deviated from reflecting user's true preference, \eg an item with higher scores simply because it belongs to a mainstream groups rather than it really meets user preference. Such biases would be further propagated and intensified into the student model, heavily deteriorating its recommendation quality. Typically, the student model would be skewed under the supervision from the biased soft labels; Worse still, note that existing KDs usually employ rank-aware sampling strategy for training a student model. The popular items which usually have abnormally higher scores would obtain more sampling opportunities and thus exert excessive contributions on training. The bias would be amplified during the distillation.
As such, it is essential to address bias issue in knowledge distillation. The core lies on blocking the causal effect from $I$ on $Y$ along the path $(I\to Z \to Y)$.

\textbf{Quantifying Bias Effect.} Given the importance of cutting off the path $(I\to Z \to Y)$, here we refer to the language of causal inference \cite{causal:Causality} and give a formula of the causal effect that we aim at estimating. We first quantify the causal effect from the bias and then remove it from the total effect to recover the desirable effect from the preference. 

Let  $Y_{A=a}|U=u$ (short as $Y_{a}|U=u$)  be the random variable with conditional distribution $p(Y|\text{do}(A=a),U=u)$ where a variable $A$ is intervened with a specific value $a$. The causal effect of a variable $I$ on $Y$ is the magnitude by which the target variable $Y$ is changed by a unit change in an variable $I$ \cite{causal:Causality}. For example, the conditional total effect of $I=i$ on $Y$ for a specific user $u$ is defined as:
\begin{equation}
	\text{TE}_i= {Y_i}|u - {Y_{{i^*}}}|u
\end{equation}
which can be understood as the difference between two hypothetical situations $I=i$ and $I=i^*$. $I=i^*$ can be considered as a benchmark situation for comparison. $\text{TE}_i$ can be decomposed into two parts: 1) the desirable causal effect along the path $I\to M \to Y$; and 2) the undesirable causal effect along the path $(I\to Z \to Y)$. By performing different interventions on $I$ along different causal paths, it is possible to isolate the contribution of the causal effect along different paths. 

Specifically, the path-specific causal effect through $(I\to Z \to Y)$ expresses the value change of $Y$ with the item popularity $Z$ change from $Z_i$ to $Z_i^*$:
\begin{equation}
	\text{PEZ}_i= {Y_{{i^*},{Z_i}}}|u - {Y_{{i^*}}}|u
\end{equation}
Accordingly, eliminating the bias can be realized by reducing $\text{PEZ}$ from $\text{TE}$, we have:
\begin{equation}
	\text{PEM}_i=\text{TE}_i-\text{PEZ}_i= {Y_i}|u - {Y_{{i^*},{Z_i}}}|u
\end{equation}
which expresses the value change of $Y$ with changing $i$ to $i^*$ while keeping $Z$ unchanged. This formula blocks the effect along $Z \to Y$ and can be fed into the student model for unbiased distillation.

However, calculating $\text{PEM}$ is difficult, as it involves a counterfactual inference since the item popularity for a specific item $i^*$ is intervened from the factual value $Z_{i^*}$ to the $Z_i$. A naive solution is to directly intervene into the training of the teacher model, \eg employing some debiasing strategies to mitigate the popularity bias \cite{10.1145/3404835.3462875}. However, as discussed before, learning an completely unbiased teacher is usually impractical and unsatisfied. Our empirical studies in Section 4 also validate that this strategy does not bring satisfactory results. 
Therefore, a new unbiased distillation method without requiring to intervene teacher model deserves exploration.

\subsection{Unbiased Knowledge Distillation}
Towards this end, in this work we propose an unbiased knowledge distillation strategy (UnKD), which conducts debiasing during the learning of the student model. The subtlety of UnKD lies on its popularity-stratified training strategy, where the unfeasible counterfactual terms have been properly offset. To be more specific, UnKD first partitions items into multiple groups according to the item popularity, where the items in one group have similar popularity. After that, for each user, UnKD ranks the items on the same group \wrt the soft label, and transfers such group-wise knowledge to supervise the learning of the student model. In fact, we have the following lemma:
\newtheorem{lem}{Lemma}
\begin{lem}
	\label{la1}
	For each user $u$, for the items with highly similar popularity, the list ranked by $Y_i$ is approximately equal to the list ranked by $\text{PEM}_i$.
\end{lem}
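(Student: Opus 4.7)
The plan is to compare the two ranking rules pairwise: for any two items $i, j$ inside the same popularity stratum, I will show that the sign of $\text{PEM}_i - \text{PEM}_j$ agrees with the sign of $Y_i|u - Y_j|u$ up to a vanishing error, so that the induced rankings coincide.

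First I would expand the difference directly from the definition $\text{PEM}_i = Y_i|u - Y_{i^*, Z_i}|u$:
\begin{equation*}
\text{PEM}_i - \text{PEM}_j = \bigl(Y_i|u - Y_j|u\bigr) - \bigl(Y_{i^*, Z_i}|u - Y_{i^*, Z_j}|u\bigr).
\end{equation*}
The first parenthesis is exactly the comparison made by the soft label $Y$, while the second parenthesis is a counterfactual comparison of the benchmark item $i^*$ under two different popularity interventions $Z_i$ and $Z_j$. The proof therefore reduces to arguing that the second term is negligible whenever $Z_i \approx Z_j$.

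Next I would invoke the structural assumption encoded by the causal graph in Figure~\ref{causal:causal_1}: conditional on $U=u$ and on the benchmark item $I = i^*$, the soft label $Y$ depends on the popularity intervention only through the edge $Z \to Y$. It is natural (and implicit in the model) to assume $Y$ is a smooth function of $Z$ along this edge, so a small popularity gap $|Z_i - Z_j| \le \varepsilon$ implies $|Y_{i^*, Z_i}|u - Y_{i^*, Z_j}|u| \le L\varepsilon$ for some modulus $L$. Then $\text{PEM}_i - \text{PEM}_j$ and $Y_i|u - Y_j|u$ differ by at most $L\varepsilon$, and whenever the soft-label gap exceeds this tolerance the two orderings agree; this is the precise sense in which the rankings are ``approximately equal'' within a stratum.

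The main obstacle I anticipate is not algebraic but conceptual: making the notion of ``approximately equal'' rigorous requires a smoothness/Lipschitz-type assumption on $Y$ as a function of $Z$ that is not explicitly stated earlier in the paper. I would handle this by stating the assumption explicitly (e.g.\ a continuity assumption on the structural equation $Y = f(M, Z)$ along $Z$) and noting that, since items in the same group are defined to have ``highly similar popularity'', the residual term $Y_{i^*, Z_i}|u - Y_{i^*, Z_j}|u$ is of lower order than typical soft-label gaps $Y_i|u - Y_j|u$ within the group. A secondary subtlety is that the benchmark item $i^*$ is itself a free choice; I would remark that the argument is invariant to this choice because $i^*$ appears identically in both $\text{PEM}_i$ and $\text{PEM}_j$, so it cancels out in the pairwise comparison and the conclusion holds for any fixed $i^*$.
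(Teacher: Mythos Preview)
Your proposal is correct and follows essentially the same route as the paper: compare two items $i,j$ in the same popularity stratum, use $Z_i \approx Z_j$ to conclude $Y_{i^*,Z_i}|u \approx Y_{i^*,Z_j}|u$, and then observe that subtracting (approximately) equal counterfactual baselines preserves the pairwise order, so $Y_i|u > Y_j|u \Leftrightarrow \text{PEM}_i > \text{PEM}_j$. The paper's own proof is a two-line version of exactly this argument; your additions (the explicit Lipschitz/continuity assumption on $Y$ as a function of $Z$, and the remark that the choice of $i^*$ cancels) make the informal ``approximately equal'' step more precise but do not change the underlying idea.
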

\begin{proof}
	For arbitrary two items $i$ and $j$ with highly similar popularity, we have $Z_i \approx Z_j$ and thus the equation ${Y_{{i^*},{Z_i}}}|u={Y_{{i^*},{Z_j}}}|u$ almost holds. Then, we have:
	\begin{align}
		{Y_i}|u > {Y_j}|u &\Leftrightarrow {Y_i}|u - {Y_{{i^*},{Z_i}}}|u > {Y_j}|u - {Y_{{i^*},{Z_j}}}|u \notag \\
		&\Leftrightarrow \text{PEM}_i > \text{PEM}_j
	\end{align}
	The lemma gets proofed.
\end{proof}
It means that the group-wise ranking lists are approximately unbiased, which provide more accurate evidence on users' true preference. UnKD extracts such accurate popularity-stratified ranking knowledge for training a student model, which avoids disturbance from the terrible popularity effect.      

\textbf{Details of UnKD.} The detailed training procedure of UnKD is illustrated in Figure \ref{causal:causal_4}(b). UnKD follows the recent advanced strategy CD \cite{10.1145/3340531.3412005}, differing in employing group-wise sampling and training. UnKD consists of the following three steps:  

(1) \textit{Group partition.} We partition items into $K$ groups according to the item popularity. The partition procedure refers to the recent work \cite{10.1145/3404835.3462875}. Specifically, we first sort items according to their popularity in descending order, and then divide the items into $K$ groups. The items with similar popularity are positioned into the same group. Also, we follow  \cite{10.1145/3404835.3462875} and let the sum of popularity over items in each group is the same. 

We remark that $K$ is an important hyperparameter balancing the trade-off between the unbiasedness and informativeness. A larger $K$ suggests a more fine-grained partition and the items in each group would have higher similarity on popularity. It means the unbiasedness is more likely to be held. However, larger $K$ would decrease the number of items in each group, and reduced the knowledge about the item ranking relations. On the contrary, a smaller $K$ could bring more information but at the expense of unbiasedness. The empirical results of how $K$ affects distillation performance are shown in Section 4.      


(2) \textit{Group-wise Sampling.} For each user, we first rank the items in each group in terms of the soft labels from the teacher. We then sample a set $\mathcal S_{ug}$ of positive-negative item pairs ($i^+,i^-$) for each group $g$ with the rank-aware probability distribution: $\emph{p}_{i}\propto e^{-rank(i)/\mu}$ \cite{10.1145/3340531.3412005}. Here $rank(i)$ represents the ranking position of $i$ in the group, and $\mu$ is a hyperparameter.

(3) \textit{Group-wise Learning.} We adopted group-wise distillation loss for training a student model:
\begin{align}
	\mathcal{L}_{G}=-\sum_u{\frac{1}{|\mathcal{U}|}\sum_{g\in \mathcal G}\sum_{(i^+,i^-)\in \mathcal{S}_{ug}} \log\sigma(\textbf{e}_{u}^T \textbf{e}_{i^{+}}-\textbf{e}_{u}^T \textbf{e}_{i^{-}})}	
\end{align}
Here the item pairs utilized for loss calculation are sampled from the stage (1) ($(i^+,i^-)\in \mathcal{S}_{ug}$). Note that the item pairs are on the same group and its relations are consistent with user true preference. The distillation loss would be accurate and provide additional useful knowledge for training a better student model. Here, $\textbf{e}_u$ and $\textbf{e}_i$ represent the embedding of $u$ and $i$, respectively. And $\sigma$ represents sigmoid function.

The final loss function for training a student model is:
\begin{align}
	\mathcal{L} =\mathcal{L}_{R} + \lambda \mathcal{L}_{G}
	\label{equ:Equation_7}
\end{align}
where the distillation loss $\mathcal{L}_{G}$ is usually accompanied with the original supervised loss $\mathcal{L}_{R}$ from the training data. Hyperparameter $\lambda$ is utilized to balance their contributions.




\section{EXPERIMENTS}
In this section, we conduct experiments to evaluate the performance of our proposed UnKD. Our experiments are intended to address the following research questions: 

\begin{itemize}
	\item [\textbf{RQ1:}] How does UnKD perform compared with existing distillation methods? Does UnKD benefit unpopular items? 
	\item [\textbf{RQ2:}] Does UnKD outperform the strong baseline that leverages the debiasing techniques in teacher model training? 
	\item [\textbf{RQ3:}] How does the hyperparameter $K$ (Group numbers) affect distillation performance?
\end{itemize}


\subsection{Experimental Setup}

\textbf{Datasets.} Three commonly-used datasets are adopted for testing the model performance including \textbf{CiteULike}\footnote{https://github.com/changun/CollMetric}, \textbf{Amazon-Apps}\footnote{http://jmcauley.ucsd.edu/data/amazon/links.html}, and \textbf{MovieLens-1M}\footnote{https://grouplens.org/datasets/movielens/}. 
For stable evaluation, we filter out users with fewer than 20 interactions. Also we transform the detailed rating value into binary for implicit recommendation as recent work \cite{10.1145/3447548.3467319}. The statistics of the datasets are shown in Table \ref{tab:table_1}.
Besides, for each user we randomly select 90\% of historical interactions as the training set, and the remaining 10\% data constitutes the testing set. We also randomly partition 10\% interactions from training data for model validation.

\begin{table}[t] %
	\centering
	\caption{Statistics of the  datasets.}
	\begin{tabular}{|c|c|c|c|c|}
		\hline
		dataset&Users&Items&Interactions&Sparsity \\
		\hline
		CiteULike   & 5219  & 25181& 125580 &99.91\% \\
		Apps       & 3898  & 11797& 128105 &99.73\% \\
		MovieLens    & 6040  & 3706&1000209&95.54\% \\
		\hline
	\end{tabular}
	
	\label{tab:table_1}
\end{table}%

\textbf{Compared Methods.} We compare our methods with the following baselines:
\begin{itemize}
	\item RD \cite{10.1145/3219819.3220021}: A classic KD method for recommendation that treats the Top-N ranked items as positive while reweighs the items according to the position.
	\item CD \cite{8970837}: A method that creates positive-negative training instances based on the item ranking position from the teacher. 
	\item DERRD \cite{10.1145/3340531.3412005}: A KD that trains a student model from both teachers' prediction and teacher latent knowledge.
	\item HTD \cite{10.1145/3447548.3467319}: An advanced KD method that distills the topological knowledge from the teacher embedding space.
\end{itemize}

We test the above distillation methods on two representative backbone models: BPR-MF \cite{DBLP:journals/corr/abs-1205-2618} and LightGCN\cite{10.1145/3397271.3401063}. We also report the performance of teacher and student models that are directly trained from the training dataset.


\begin{table}[t]\small%
	\centering
	\caption{Overall performance comparison between our method and  baselines. All metrics are based on the top-10 results. where the best performance is bold and the second best underlined.}
	\begin{tabular}{|c|c|c|c|c|c|}
		\hline
		&Backbone Model&\multicolumn{2}{c|}{BPRMF}&\multicolumn{2}{c|}{LightGCN}\\
		\hline 
		Dataset&Method&Recall&NDCG&Recall&NDCG\\
		\hline
		
		\multirow{8}{*}{MovieLens}&Teacher&0.1810&0.2951&0.1850&0.3012\\
		\cline{2-6}
		&Student&0.1435&0.2511&0.1456&0.2581\\
		&RD&\underline{0.1473}&\underline{0.2559}&0.1471&0.2583\\
		&CD&0.1445&0.2534&0.1477&0.2602\\
		&DERRD&0.1436&0.2532&\underline{0.1487}&\underline{0.2606}\\
		&HTD&0.1441&0.2539&0.1472&0.2592\\
		\cline{2-6}
		&UnKD&\textbf{0.1547}&\textbf{0.2615}&\textbf{0.1569}&\textbf{0.2672}\\
		&impv-e\%&5.02\%&2.18\%&5.51\%&2.53\%\\
		\hline
		\multirow{8}{*}{Apps}&Teacher&0.0991&0.0760&0.1007&0.0782\\
		\cline{2-6}
		&Student&0.0719&0.0539&0.0811&0.0643\\
		&RD&0.0768&0.0596&0.0831&0.0647\\
		&CD&\underline{0.0790}&\underline{0.0608}&\underline{0.0848}&\underline{0.0658}\\
		&DERRD&0.0729&0.0562&0.0832&0.0648\\
		&HTD&0.0732&0.0561&0.0833&0.0652\\
		\cline{2-6}
		&UnKD&\textbf{0.0853}&\textbf{0.0644}&\textbf{0.0867}&\textbf{0.0678}\\
		&impv-e\%&7.97\%&5.92\%&2.24\%&3.04\%\\
		\hline
		\multirow{8}{*}{CiteULike}&Teacher&0.1518&0.1016&0.1657&0.1139\\
		\cline{2-6}
		&Student&0.0760&0.0477&0.0783&0.0510\\
		&RD&\underline{0.0808}&0.0514&0.0833&0.0538\\
		&CD&0.0801&\underline{0.0518}&0.0936&0.0616\\
		&DERRD&0.0793&0.0511&0.0809&0.0527\\
		&HTD&0.0788&0.0485&\underline{0.0958}&\underline{0.0628}\\
		\cline{2-6}
		&UnKD&\textbf{0.0863}&\textbf{0.0550}&\textbf{0.1006}&\textbf{0.0654}\\
		&impv-e\%&6.80\%&6.17\%&5.01\%&4.14\%\\
		\hline
	\end{tabular}
	\label{tab:table_2}
\end{table}%

\textbf{Implementation Details.} For the backbone model, we closely refer to \cite{10.1145/3340531.3412005} and set the embedding dimension of the teacher as 100 and the student as 10. Adam is adopted as our optimizer. The search space of the learning rate for all experiments is \{0.01,0.001,0.0001\}, and the space of the L2 regularization coefficient is \{0.01,0.001,0.0001\}. We adopt the early stopping strategy that stops training if $NDCG$ on the validation data does not increase for 100 epochs. The total number of training epochs is set to 1000 epochs. For the compared baselines, we closely follow their settings reported in the relevant papers or directly utilize their codes if they are available. We also finely tuned their hyperparameters to ensure optimum.

For our method, during the training phase, the number of groups $K$ is set in the range $\{2,3,4,...,10\}$. In the testing phase, for better visualization, we only divide items into two groups, popular group and unpopular group.  $\lambda$ is set in the range $\{0.1,0.2,0.3,...,1.0\}$, and $\mu$ is set in the range $\{10,20\}$. For each user, the number of soft-labels is set in the range \{30, 40\}.

\textbf{Evaluation Metrics.} The conventional ranking metrics including normalized
discounted cumulative gain (NDCG@N), and Recall (Recall@N) are adopted for evaluating model performance. We also report Recall for the popular (or unpopular) group, \ie estimating the fraction of relevant popular (or unpopular) items that are in the top-N ranking list. This metric can reflect how well the model retrievals the popular (or unpopular) items. In this work, we simply choose N as 10. 

\subsection{Performance Comparison (RQ1)}

\textbf{Overall performance comparison.}  Table \ref{tab:table_2} shows the overall performance of our UnKD compared with other KD methods. We observe our UnKD consistently outperforms other KDs on all three datasets. Especially in the dataset CiteULike, the improvements are encouraging. UnKD achieves 5.53\% on average improvement over the baselines. Obviously, this result validates that addressing bias issue in knowledge distillation is essential and indeed boosts distillation performance. 

\textbf{Comparison in terms of popular/unpopular groups.} To understand how our UnKD addresses bias issue in knowledge distillation, we also report the performance (recall@10) for popular and unpopular item groups. As the results for popular/unpopular groups may have different scale, here we report the relative improvements over the student baseline for better presentation. Figure \ref{fig:image_3} illustrates the results. We make the following observations: 1) the improvements of existing knowledge distillation methods are mainly from the popular items, while the performance of unpopular items severely suffers. 2) The improvements of UnKD mainly lies on unpopular items. Especially in the dataset CiteULike, UnKD achieves over 100\% performance gain for unpopular items. Our UnKD could indeed address bias issue in knowledge distillation, yielding more accurate and fair recommendations.

\subsection{Distillation Procedure vs. Teacher Training (RQ2)}

\begin{table*}[ht]%
	\centering
	\caption{Performance comparison (recall@10) between our UnKD and the baselines that leverages debiasing technique in model training. The best performance is shown in bold, and the second best performance is underlined.}
	\begin{tabular}{|c|c|c|c|c|c|c|c|c|c|c|}
		\hline
		&Dataset&\multicolumn{3}{c|}{MovieLens}&\multicolumn{3}{c|}{Apps}&\multicolumn{3}{c|}{CiteULike}\\
		\hline 
		\multirow{2}{*}{Backbone Model}&\multirow{2}{*}{Method}&\multirow{2}{*}{Overall}&Popular&Unpopular&\multirow{2}{*}{Overall}&Popular&Unpopular&\multirow{2}{*}{Overall}&Popular&Unpopular\\
		&&&Group&Group&&Group&Group&&Group&Group\\
		\hline
		& Student  &0.1435&0.2156&0.0250& 0.0719& 0.1031&0.0109 &0.0760&0.0831&\underline{0.0095}\\
		& CD  & 0.1445&\textbf{0.2258}&0.0179  & 0.0790&0.1212&0.0090 &0.0801&0.0887&0.0088 \\
		& PD-CD  &\underline{0.1454}&0.2205&0.0210&0.0795&0.1176&\underline{0.0113} &\underline{0.0805}&\underline{0.0890}&0.0092 \\
		BPRMF&HTD&0.1441&\underline{0.2228}&0.0187 &0.0732&0.1195&0.0061&0.0788&0.0885&0.0068\\
		& PD-HTD  &0.1443&0.2150&\underline{ 0.0263} & \underline{0.0808}&\underline{0.1240}&0.0076 & 0.0798&\textbf{0.0897}&0.0073\\
		\cline{2-11}
		& UnKD  &\textbf{0.1547}&0.2205 &\textbf{0.0311}&\textbf{0.0853}&\textbf{0.1274}&\textbf{0.0147}&\textbf{0.0863}&0.0854 &\textbf{0.0208}\\
		\hline
		& Student &0.1456&0.2280&\underline{0.0228}  &0.0811&0.1242&0.0093 &0.0783&0.0885&0.0080 \\
		& CD  &0.1477&0.2316&0.0169&0.0848&\underline{0.1310}&0.0091 &0.0936&0.1067&0.0081 \\
		& PD-CD  &\underline{0.1496}&\underline{0.2369}&0.0172&\underline{0.0851}&0.1308&\underline{0.0095}&0.0942&0.1020&0.0110 \\
		LightGCN&HTD&0.1472&0.2328&0.0079&0.0833&0.1291&0.0091&0.0958&\underline{0.1093}&0.0085 \\
		& PD-HTD  &0.1485&0.2364&0.0159 &0.0835&0.1288& 0.0093&\underline{0.0979}&\textbf{0.1102}&\underline{0.0128}\\		
		\cline{2-11}
		& UnKD  &\textbf{0.1569}&\textbf{0.2384}&\textbf{0.0292}&\textbf{0.0867}&\textbf{0.1325}&\textbf{0.0118}&\textbf{0.1006}&0.1076&\textbf{0.0162}\\
		
		\hline			
	\end{tabular}
	
	\label{tab:table_4}
\end{table*}%

	%
	%
	%
\begin{figure*}[t]
	\centering
	
	\includegraphics[scale=0.45]{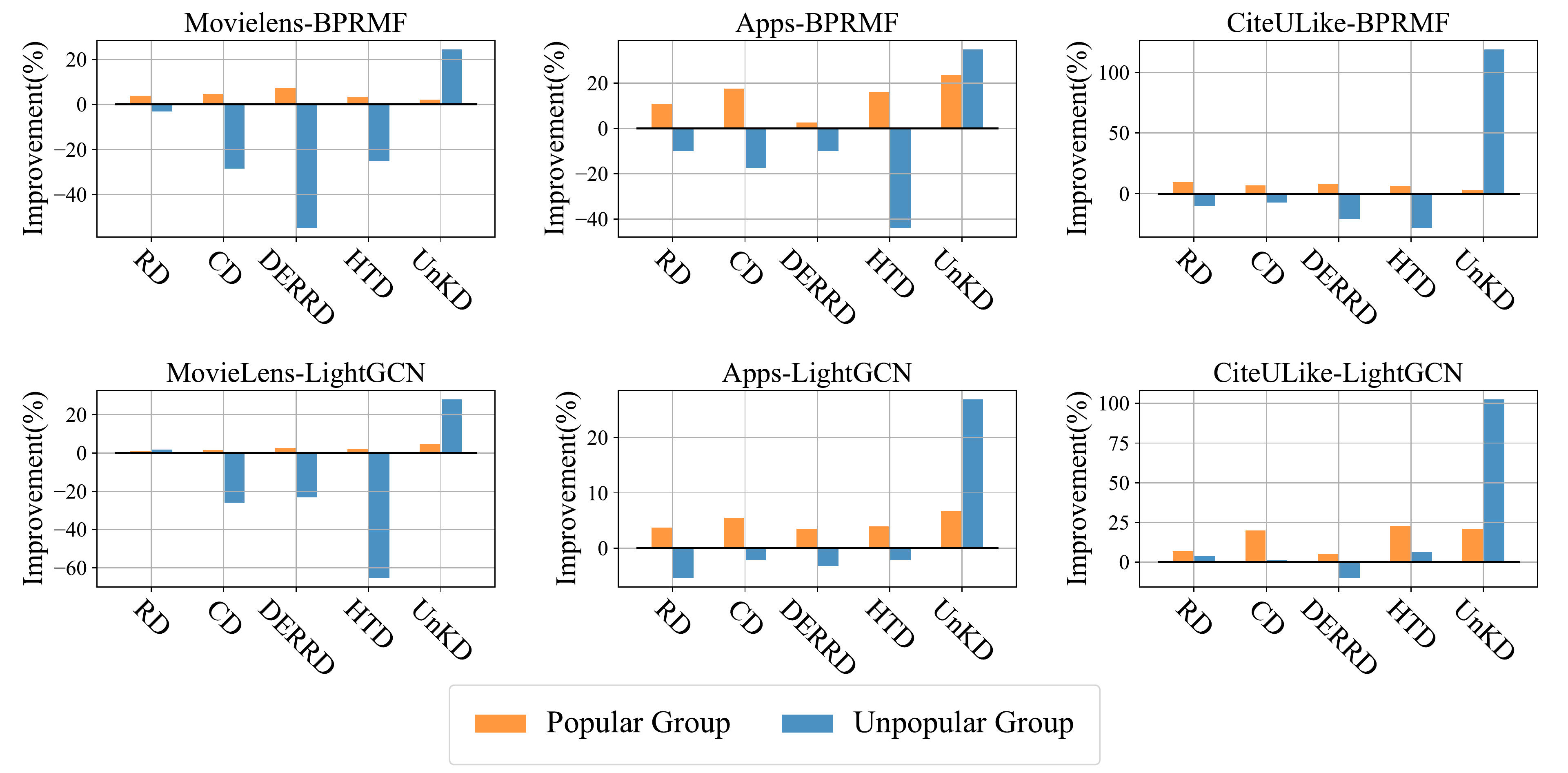}
	\caption{The relative improvements (\wrt recall@10) of KDs  over the baseline that directly trained from the dataset. Here we visualize the results in terms of popular and unpopular group, respectively.}
	\label{fig:image_3}
\end{figure*}%

\begin{figure*}[th]
	\centering
	
	\includegraphics[scale=0.45]{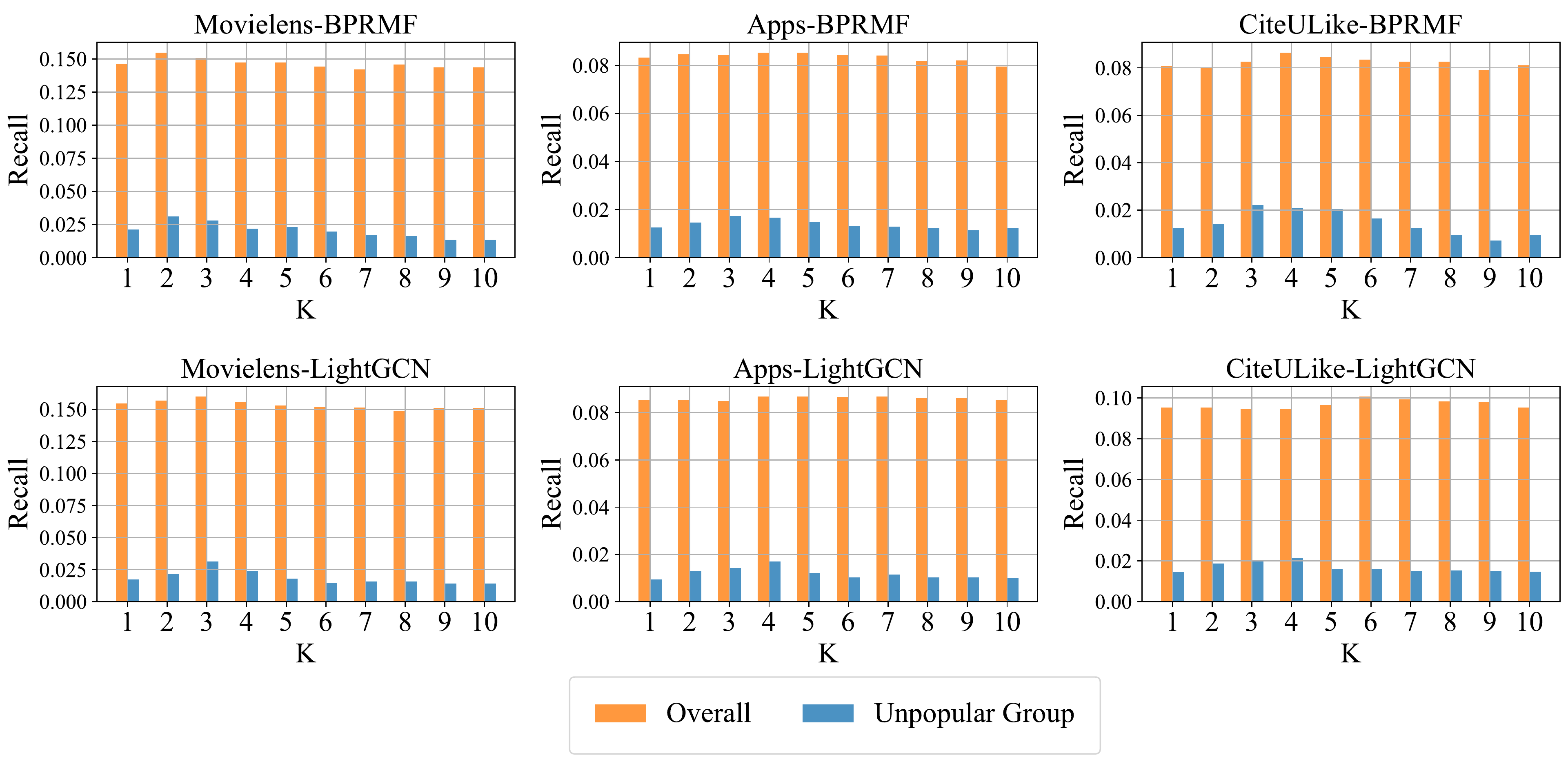}
	\caption{Performance comparison with varying $K$.}
	\label{fig:image_5}
\end{figure*}%

Although previously we have discussed that directly intervening the teacher model training for debiasing is not a good choice, we are still curious about its performance. Here we compare UnKD with a strong baseline that leverages an advanced debiasing technology (PD \cite{10.1145/3404835.3462875}) in teacher model training. PD leverages causal inference to tackle the popularity bias, and usually achieves state-of-the-art performance in a widely range of datasets. We integrate PD into two SOTA KDs (\ie PD-CD, PD-HTD) for comparisons. 

Table \ref{tab:table_4} presents the results. We make the following observations: Leveraging PD in teacher model training could boost the performance of unpopular items. However, the improvements are not significant as our UnKD. The reason can be attributed to the complexity of the bias in teacher. The bias may roots in many factors. Existing debiasing methods are usually tailored for one or two specific factors and may not eliminate the bias accurately and completely. Also, an improper debiasing may hurt model accuracy. UnKD could circumvent this challenging problem and does not require to intervene the training of the cumbersome teacher model, which is more effective and satisfactory.

\subsection{Effect of the Parameter $K$ (RQ3)}
It will be interesting to explore how hyper-parameter $K$ affects the performance of UnKD, where $K$ indicates the number of partition groups in the distillation. Figure \ref{fig:image_5} illustrates the results (Recall@10) on all items and unpopular items, respectively. 

As can be seen, with the number of groups ($K$) increasing, with few exception, the performance on unpopular items will become better first. The reason is that a larger $K$ suggests a more fine-grained partition and the items in each group is prone to have higher popularity similarity. The unbiasedness of the distillation is more likely to be held. However,  when $K$ surpasses a threshold, the performance becomes worse with further increase of $K$. The reason is that a further larger $K$ would make the number of items in each group decrease. The knowledge about some item ranking relations is missing. As such, $K$ balances the trade-off between the informativeness and unbiasedness. Set $K$ to a proper value (\eg $K=4$) could achieve best performance for unpopular items. Similar results are observed for the overall performance, except it is relatively stable. The performance of popular items is relatively robust to $K$. This is because popular items can also benefit from the rich label information from the data. 

\section{Related Work}
In this section, we review the most related work from the following three perspectives.

\textbf{Knowledge Distillation in Recommendation.}  Knowledge distillation (KD) is a promising model compression technique that first trains a large teacher model and then transfers the knowledge from the teacher to the target compact student model \cite{10.1145/3219819.3220021,Yim_2017_CVPR,DBLP:journals/corr/abs-1710-09282,zhang2020distilling}. KDs have been widely applied in recommender systems to reduce inference latency. They mainly utilize soft labels (\ie teacher predictions) for knowledge transfer. For example, RD \cite{10.1145/3219819.3220021} ranked the soft labels from the teacher and treated the top-N ranked items as positive for training a student model; CD \cite{8970837} utilized soft labels to create positive and negative distillation instances; Soft labels also have been considered by DERRD \cite{10.1145/3340531.3412005} to create the list-wise distillation loss function. In additional to soft labels, some work considered to transfer the hidden knowledge among the middle layer of teachers (\eg latent representation).  For example, DERRD \cite{10.1145/3340531.3412005} leveraged expert neural networks to extract useful information from the teacher representations; HTD \cite{10.1145/3447548.3467319} distilled the topological knowledge built upon the relations in the teacher embedding space. Despite their decent performance, we remark that existing distillation methods are severely biased towards popular items. 

Besides model compression, there are also some other applications of KDs in recommender system \cite{10.1145/3397271.3401083,2022Cross,xia2022device,10.1145/3459637.3482117,ding2022interpolative}. For example, in the social recommendation, KD is used to integrate the knowledge from various relational graphs \cite{10.1145/3485447.3512003}; KD also plays an important role in tackling data selection bias \cite{DBLP:journals/corr/abs-1910-01444,DBLP:journals/corr/abs-2201-08024}; Some work also considers to leverage KD for model ensemble\cite{zhu2020ensembled}.

\textbf{Bias in Recommendation.} As this work focuses on popularity bias, here we mainly review recent work on this bias. For other types of biases and their debiasing techniques, we simply refer readers to a comprehensive survey \cite{DBLP:journals/corr/abs-2010-03240} for more information. 

Popularity bias depicts a common phenomenon \cite{DBLP:journals/corr/abs-2010-03240} that Popular items are recommended even more frequently than their popularity would warrant. Ignoring the popularity bias will result in many severe issues like affecting recommendation accuracy, decreasing recommendation diversity, and even raising ``Matthew effect''.
To tackle popularity bias, recent work mainly lies on three types: 1) leveraging suitable regularization in model learning or ranking to push the model towards balanced recommendation lists \cite{10.1145/3437963.3441820};  2) conducting adversarial training to improve the recommendation opportunity of the niche items \cite{10.1145/3269206.3269264}; 3) resorting to causal graph to identify the origin of the bias and conduct debiasing accordingly \cite{10.1145/3404835.3462875}. Although existing methods on popularity bias have achieved great progress, how to completely eliminate popularity bias is still an open problem. Popularity bias is seriously complicated and may root in various components including but not limited to optimizer \cite{tang2020longtailed}, model architecture \cite{DBLP:journals/corr/abs-1909-06362}, or training data \cite{10.1145/3404835.3462919}. In RS, popularity bias is also occurred during the knowledge distillation, which has not been explored.


\textbf{Causal Recommendation.} Causal inference has received increasing attention in the field of machine learning \cite{Niu_2021_CVPR,10.1145/3404835.3462971,10.1145/3240323.3240360}. In recommender systems, causal inference can be utilized for tackling bias \cite{10.1145/3404835.3462875}, making explainable recommendation \cite{xu2021learning} or improving model generalization. As this work focuses on bias, here we mainly review recent work on causality-enhanced debiasing. They can be classified into three types: 1) The most well-known causal strategy for debiasing is IPS, which reweighs instances with the inverse of the propensity scores. IPS has been widely for tackling various bias, including position bias \cite{10.1145/3331184.3331202}, selection bias \cite{pmlr-v48-schnabel16}, and exposure bias \cite{10.1145/3240323.3240355}. 2) Another type of causality-based debiasing would resort to a causal graph. They leverage the causal graph to trace the origin of bias, and then perform counterfactual inference to cut off the effect from the bias such as PDA \cite{10.1145/3404835.3462875}, MACR \cite{10.1145/3447548.3467289}. 
3) The last relies on constructing counterfactual instances \cite{10.1145/3459637.3482305}. This method uses counterfactual inference to produce counterfactual instances that are used to offset the bias.


\section{Conclusion}
In this work, we studies on an important but unexplored problem --- bias issue in distilling a recommendation model. we first identify the origin of the bias --- it roots in the biased soft labels from the teacher, and is further propagated and intensified during the distillation. To rectify this, we proposes an unbiased teacher-agonistic knowledge distillation (UnKD) that extracts popularity-aware ranking knowledge to guide student learning. Our experiments on three real-world datasets validate that our UnKD outperforms state-of-the-arts by a large margin, especially for unpopular item group.

Note that this work only explores distillation bias from the popularity perspective. One interesting direction for future work is to explore more fine-grained bias (\eg feature-level fairness) in knowledge distillation.  
Also, considering sequential recommendation is drawing increasingly attention, it will be valuable to explore the model compression technique for the large sequential recommendation models. 

\section{ACKNOWLEDGMENTS}
This work is supported by the National Key Research and Development Program of China (2021ZD0111802), the National Natural Science Foundation of China (62102382, 62272437, 62106221), the CCCD Key Lab of Ministry of Culture and Tourism and the Starry Night Science Fund of Zhejiang University Shanghai Institute for Advanced Study (SN-ZJU-SIAS-001).

\bibliographystyle{ACM-Reference-Format}
\bibliography{sample-manuscript}


\end{document}